\newtheoremstyle{defstyle}{\topsep}{\topsep}{\upshape}{}{\bfseries}{}{.5em}{}
\newtheoremstyle{newstyle}{\topsep}{\topsep}{}{}{\bfseries}{}{.5em}{}
\theoremstyle{newstyle}
\newtheorem{lemma}{Lemma}[section]
\newtheorem{corollary}[lemma]{Corollary}
\newtheorem{theorem}[lemma]{Theorem}
\theoremstyle{defstyle}
\newtheorem{definition}[lemma]{Definition}
\def\lm{{\Lambda\mu}}
\def\olam{{\overline{\lambda}}}
\def\a{{\alpha}}
\def\b{{\beta}}
\def\K{{\sf K}}
\def\S{{\sf S}}
\def\C{{\sf C}}
\def\W{{\sf W}}
\def\k{{\sf k}}
\def\s{{\sf s}}
\def\c{{\sf c}}
\def\w{{\sf w}}
\def\St{{\cal S}}
\def\Vars{{\rm Var}}
\def\Term{{\rm Term}}
\def\Stream{{\rm Stream}}
\def\tup#1{{({#1})}}
\def\CL{{\sf CL}}
\def\CLm{{\sf SCL}}
\def\intp#1{{[\![{#1}]\!]}}
\def\intprt#1{{[\![{#1}]\!]_{\rho,\theta}}}
\def\intP#1{{\langle\!|{#1}|\!\rangle}}
\begin{document}

\title{Extensional Models of Untyped Lambda-mu Calculus}

\author{Koji Nakazawa
\institute{Graduate School of Informatics, Kyoto University}
\email{knak@kuis.kyoto-u.ac.jp}
\and
Shin-ya Katsumata
\institute{Research Institute for Mathematical Sciences, Kyoto University}
\email{sinya@kurims.kyoto-u.ac.jp}
}

\def\titlerunning{Extensional Models of Untyped Lambda-mu Calculus}
\def\authorrunning{K.~Nakazawa and S.~Katsumata}

\maketitle

\begin{abstract}
This paper proposes new mathematical models of the untyped Lambda-mu
calculus. One is called the stream model, which is an extension of the
lambda model, in which each term is interpreted as a function from
 streams to individual data. 
The other is called the stream combinatory algebra, which is an
extension of the combinatory algebra, and it is proved that the
extensional equality of the Lambda-mu calculus is equivalent to
equality in stream combinatory algebras. In order to define the stream
combinatory algebra, we introduce a combinatory calculus $\CLm$, which
is an abstraction-free system corresponding to the
Lambda-mu calculus.  Moreover, it is shown that stream models
are algebraically characterized as a particular class of stream
combinatory algebras.
\end{abstract}

\section{Introduction}
The $\lambda\mu$-calculus was originally proposed by Parigot in
\cite{Parigot1992} as a term assignment system for the classical natural
deduction, and some variants of $\lambda\mu$-calculus have been widely
studied as typed calculi with control operators. Parigot noted that the
$\mu$-abstraction of the $\lambda\mu$-calculus can be seen as a
potentially-infinite sequence of the $\lambda$-abstraction, and
Saurin showed that an extension of the untyped
$\lambda\mu$-calculus, which was originally considered by de Groote in
\cite{deGroote1994} and was called $\Lambda\mu$-calculus by Saurin, can
be seen as a stream calculus which enjoys some fundamental properties
\cite{Saurin2005,Saurin2010a,Saurin2010b}. In particular, Saurin proved
the separation theorem of the $\Lambda\mu$-calculus in
\cite{Saurin2005}, while it does not hold in the original
$\lambda\mu$-calculus \cite{David-Py2001}.

In \cite{Saurin2010b}, Saurin also proposed the B\"{o}hm-tree
representation of the $\Lambda\mu$-terms.
That suggests a relationship between the syntax and the semantics for
the untyped $\Lambda\mu$-calculus like the neat correspondence between
the B\"{o}hm-trees and Scott's $D_\infty$ model for the untyped
$\lambda$-calculus. However, models of the untyped 
$\Lambda\mu$-calculus have not been sufficiently studied yet, so
we investigate how we can extend the results on the models of the
$\lambda$-calculus to the $\Lambda\mu$-calculus.


In this paper, we give simple extensions of the $\lambda$-models and the
combinatory algebras, and show that they can be seen as models of the
untyped $\Lambda\mu$-calculus. First, we introduce {\em stream models}
of the untyped $\Lambda\mu$-calculus, which are extended from the
$\lambda$-models. The definition of stream model is based on the
idea that the $\Lambda\mu$-calculus represents functions on streams,
that is, in stream models, every $\Lambda\mu$-term is interpreted as
a function from streams to individual data. Then, we give a new
combinatory calculus $\CLm$, which is an extension of the ordinary
combinatory logic $\CL$, and corresponds to the $\Lambda\mu$-calculus.
The structure of $\CLm$ induces another model of the untyped
$\Lambda\mu$-calculus, called {\em stream combinatory algebra}.  We will
show that the extensional equality of the $\Lambda\mu$-calculus is
equivalent to equality in extensional stream combinatory algebras.  We
also show that the stream models are algebraically characterized as a
particular class of the stream combinatory algebras.

\section{Untyped $\Lambda\mu$-Calculus}

First, we remind the untyped $\Lambda\mu$-calculus.  We are following
the notation of \cite{Saurin2005}, because it is suitable to see the
$\lm$-calculus as a calculus operating streams.

\begin{definition}[$\lm$-calculus]\rm
Suppose that there are two disjoint
sets of variables: one is the set $\Vars_T$ of {\it term variables}, denoted
by $x,y,\cdots$, and the other is the set $\Vars_S$ of {\it stream variables},
denoted by $\a,\b,\cdots$. Terms and axioms of the $\lm$-calculus
are given in the Fig. \ref{fig:lm-calculus}.
\begin{figure*}[t]
Terms:
\[
 M,N::=x \mid \lambda x.M \mid MN \mid \mu\a.M \mid M\a
\]

Axioms:
\begin{align*}
 (\lambda x.M)N & =_{\beta_T} M[x:=N] \\
 (\mu\a.M)\b & =_{\beta_S} M[\a:=\b] \\
 \lambda x.Mx & =_{\eta_T} M & \mbox{($x\not\in FV(M)$)}\\
 \mu\a.M\a & =_{\eta_S} M & \mbox{($\a\not\in FV(M)$)}\\
 (\mu\a.M)N & =_{\mu} \mu\a.M[P\a:=PN\a] 
\end{align*}
\caption{Untyped $\lm$-calculus}
\label{fig:lm-calculus}
\end{figure*}
The set of the $\lm$-terms is denoted by $\Term_{\lm}$. We
use the following abbreviations: $\lambda x_1 x_2\cdots x_n.M$ denotes
$\lambda x_1.(\lambda x_2.(\cdots(\lambda x_n.M)\cdots))$ and similarly
for $\mu$, $MA_1\cdots A_n$ denotes $(\cdots(MA_1)\cdots)A_n$,
in which each $A_i$ denotes either a term or a stream variable, and the
top-level parentheses are also often omitted. Variable occurrences of
$x$ and $\a$ are bound in $\lambda x.M$ and $\mu\a.M$,
respectively. Variable occurrences which are not bound are called free,
and $FV(M)$ denotes the set of variables freely occurring in $M$.
In the axioms, $M[x:=N]$ and $M[\a:=\b]$ are the
usual capture-avoiding
substitutions, and $M[P\a:=PN\a]$ recursively replaces each
subterm of the form $P\a$ in $M$ by $PN\a$. The relation
$M=_{\lm}N$ is the compatible equivalence relation defined from
the axioms.

{\em Contexts} are defined as $K ::= []\a \mid K[[]M]$, and
$K[M]$ is defined in a usual way. The substitution $M[P\a:=K[P]]$
recursively replaces each subterm of the form $P\a$ in $M$ by $K[P]$.
\end{definition}

Each context has the form $[]M_1\cdots M_n\a$ and it corresponds to
a stream data, the initial segment of which is $M_1 \cdots M_n$ and the
rest is $\a$. It is easy to see that
$K[\mu\a.M]=_{\lm}M[P\a:=K[P]]$ for any term $M$ and any
context $K$.

The untyped $\lm$-calculus can be seen as a calculus operating
streams, in which the $\mu$-abstractions represent functions on streams,
and a term $MN_0\cdots N_n\alpha$ means a function application of $M$
to the stream data $[]N_0\cdots N_n\alpha$. For example, the term
${\sf hd} = \lambda x.\mu\a.x$ is the function to get the head element
of streams since we have ${\sf hd}\,N_0\cdots N_n\beta =_{\beta_T}
(\mu\a.N_0)N_1\cdots N_n\b =_{\mu} (\mu\a.N_0)\b =_{\beta_S} N_0$. For
another example, we have a term ${\sf nth}$ representing the function
which takes a stream and a numeral $c_n$ and returns the $n$-th element
of the stream. The term ${\sf nth}$ is defined as
\[
 Y(\lambda fx.\mu\a.\lambda y.{\sf if}\ ({\sf zero?}\ y)\ {\sf
 then}\ x\ {\sf else}\ f\a(y-1)),
\]
where $Y$ is a fixed point operator in the $\lambda$-calculus, and we
have
\[
 {\sf nth}\,N_0 N_1 N_2 \cdots N_n \beta\,c_i =_{\lm} N_i
\]
for any $0\le i\le n$. However, the $\lm$-calculus has no term
representing a stream, and that means $\lm$-terms do not directly
represent any function which returns streams.

In Parigot's original $\lambda\mu$-calculus \cite{Parigot1992}, terms of
the form $P\a$, which are originally denoted by $[\a]P$, are
distinguished as {\it named terms} from the ordinary terms, and bodies
of $\mu$-abstractions are restricted to the named terms. On the other
hand, we consider $P\a$ as an ordinary term and any term can be the body
of $\mu$-abstraction in the $\lm$-calculus. For example,
neither $M\a N$ nor $\mu \a.x$ is allowed as a term in the original
$\lambda\mu$-calculus, whereas they are well-formed terms in the
$\lm$-calculus. Such extensions of the $\lambda\mu$-calculus in
which the named terms are not distinguished have been originally studied
by de Groote \cite{deGroote1994}, and Saurin \cite{Saurin2005} considered a
reduction system with the $\eta$-reduction,
where another axiom
\[
 \mu\a.M \to_{\it fst} \lambda x.\mu\a.M[P\a:=Px\a]
\]
is chosen instead of ($\mu$). For extensional
equational systems, the axioms ($\mu$) and ({\it fst}) are equivalent
since
\[
 \mu\a.M
 =_{\eta_T} \lambda x.(\mu\a.M)x
 =_{\mu} \lambda x.\mu\a.M[P\a:=Px\a], \mbox{ and}
\]
\[
 (\mu\a.M)N 
 =_{\it fst} (\lambda x.\mu\a.M[P\a:=Px\a])N
 =_{\it \beta_T} \mu\a.M[P\a:=PN\a]).
\]
\newcommand{\Bf}[1]{{\bf #1}}
\newcommand{\CC}{\Bf C}
\newcommand{\Set}{{\rm Set}}
\newcommand{\Terms}{{\rm Term}}
\section{Stream Models}\label{sec:smodel}

In this section, we introduce extensional stream models for the
untyped $\Lambda\mu$-calculus. The definition follows the idea that
the $\Lambda\mu$-terms represent functions on streams.


\subsection{Definition of Extensional Stream Models}

In the following, we use $\olam$ to represent meta-level functions. A
{\em stream set} over a set $D$ is a pair $\tup{S,::}$ of a set
$S$ and a bijection $(::):D\times S\rightarrow S$.  A typical
stream set over $D$ is the $\Bf N$-fold product of $D$, that
is, $\tup{D^{\Bf N},::}$ where
\begin{align*}
 d::s & = \olam n\in\Bf N.\begin{cases}
	      d & \mbox{($n=0$)}\\
	      s(n-1) & \mbox{($n>0$)}.
	     \end{cases}
\end{align*}
For a function $f:D\times S\to E$, $\olam d::s\in S.f(d,s)$ denotes
the function $f\circ(::)^{-1}:S\rightarrow E$.

\begin{definition}[Extensional stream models]\label{def:ext}\rm
An {\it extensional stream model}
is a tuple $\tup{D,S,[S\to D],::,\Psi}$ such that

1. $\tup{S,::}$ is a stream set over $D$.

2. $[S\to D]$ is a subset of $S\to D$.

3. $\Psi:[S\to D]\to D$ is a bijection. We write its inverse by $\Phi$.

4. There is a (necessarily unique) function
$\intp{-}:\Terms_{\lm}\times (\Vars_T\to D)\times(\Vars_S\to
S)\to D$, called {\em meaning function}, such that
       \begin{align*}
	\intp{x}_{\rho,\theta} & = \rho(x)\\
	\intp{\lambda x.M}_{\rho,\theta} & = \Psi(\olam d::s\in
	S.\Phi(\intp{M}_{\rho[x\mapsto d],\theta})(s))\\
	\intp{MN}_{\rho,\theta} & = \Psi(\olam s\in
	S.\Phi(\intp{M}_{\rho,\theta})(\intp{N}_{\rho,\theta}::s))\\
	\intp{\mu\a.M}_{\rho,\theta} & = \Psi(\olam s\in
	S.\intp{M}_{\rho,\theta[\a\mapsto s]})\\
	\intp{M\a}_{\rho,\theta} & = \Phi(\intp{M}_{\rho,\theta})(\theta(\a)).
       \end{align*}
Here $\rho[x\mapsto d]$ is defined by
\[
 \rho[x\mapsto d](y) = \begin{cases}
			d & \mbox{($x=y$)}\\
			\rho(y) & \mbox{($x\not=y$)},
		       \end{cases}
\]
and $\theta[\a\mapsto s]$ is defined similarly.
We use the notation $d\star s$ to denote $\Phi(d)(s)$ for $d\in D$ and
 $s\in S$.
\end{definition}
The condition 4 requires that each argument of $\Psi$ is contained in
$[S\to D]$.
In the next subsection, we show that extensional stream models can be
obtained from the solutions of the simultaneous recursive equations $D
\times S \cong S$ and $S \Rightarrow D \cong D$ in a well-pointed CCC
(Theorem \ref{thm:cat-lambda-model}).

\begin{lemma}\label{lem:smodel}
The following hold.

1. $\intprt{M[x:=N]}
 =\intp{M}_{\rho[x\mapsto \intprt{N}],\theta}$.

2. $\intprt{M[\a:=\b]}
 =\intp{M}_{\rho,\theta[\a\mapsto\theta(\b)]}$.

3. $\intprt{M[P\a:=PN\a]}
 =\intp{M}_{\rho,\theta[\a\mapsto\intprt{N}{}::\theta(\a)]}$.
\end{lemma}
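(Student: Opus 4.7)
The plan is to prove the three identities by a single simultaneous structural induction on $M$, using the defining equations of the meaning function $\intp{-}$ and the bijection $\Phi=\Psi^{-1}$.

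For (1), the argument is the routine $\lambda$-model substitution lemma transported to our setting: the variable case splits on whether $M=x$ or $M$ is a different variable, and each inductive case ($\lambda y.M'$, $M'N'$, $\mu\b.M'$, $M'\b$) is handled by unfolding one step of the definition of $\intp{-}$, invoking the induction hypothesis on the immediate subterms, and then folding back. Capture is handled by the usual Barendregt convention. Part (2) is symmetric; its only noteworthy case is $M=M'\b$: when $\b=\a$ both sides collapse to $\Phi(\intp{M'}_{\rho,\theta[\a\mapsto\theta(\b)]})(\theta(\b))$, and when $\b\neq\a$ the substitution passes through unchanged.

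Part (3) is the substantive clause. In every case other than $M=M'\a$, the syntactic substitution simply distributes over subterms and the computation reduces to the induction hypothesis (for $\mu\a.M'$ we invoke the variable convention so that the bound $\a$ is not touched). For $M=M'\a$, write $L$ for $M'[P\a:=PN\a]$, so that $M[P\a:=PN\a]=LN\a$. Unfolding the clauses for stream-application and ordinary application and using $\Phi\circ\Psi=\id$ yields
\[
 \intprt{LN\a} \;=\; \Phi(\intprt{LN})(\theta(\a)) \;=\; \Phi(\intprt{L})(\intprt{N}::\theta(\a)).
\]
The induction hypothesis replaces $\intprt{L}$ by $\intp{M'}_{\rho,\theta'}$ where $\theta'=\theta[\a\mapsto \intprt{N}::\theta(\a)]$, and the right-hand side $\intp{M'\a}_{\rho,\theta'}$ unfolds to $\Phi(\intp{M'}_{\rho,\theta'})(\theta'(\a))=\Phi(\intp{M'}_{\rho,\theta'})(\intprt{N}::\theta(\a))$, matching the left-hand side.

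The main conceptual point behind (3) is the observation that the syntactic rewrite $P\a\leadsto PN\a$ mirrors exactly the semantic operation of prepending $\intprt{N}$ to the stream $\theta(\a)$; once this correspondence is spotted the induction proceeds mechanically. The most delicate bookkeeping is recognising that in the $M'\a$ case the recursive nature of $[P\a:=PN\a]$ forces one to first substitute inside $M'$ and then attach $N\a$ on the outside, and that the $\mu$-cases require the usual care with bound stream variables to ensure that no free $\a$ gets captured.
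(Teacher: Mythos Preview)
Your proposal is correct and follows exactly the paper's approach: a structural induction on $M$, with the only case spelled out being $M=M'\a$ for part~3, computed via the same chain $\intprt{LN\a}=\intprt{L}\star(\intprt{N}::\theta(\a))$ followed by the induction hypothesis and the unfolding of $\intp{M'\a}_{\rho,\theta'}$. The paper merely uses the $\star$ shorthand where you write $\Phi(-)(-)$, but the computation is identical.
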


\begin{proof}
 By induction on $M$. We show only the case of $M=M'\a$ for 3.
 \begin{align*}
  &\intprt{(M'\a)[P\a:=PN\a]}\\
  =& \intprt{M'[P\a:=PN\a]N\a}\\
  =& \intprt{M'[P\a:=PN\a]}\star(\intprt{N}::\theta(\a))\\
  =& \intp{M'}_{\rho,\theta[\a\mapsto\intprt{N}::\theta(\a)]}
  \star(\intprt{N}::\theta(\a)) & \mbox{(by IH)}\\
  =& \intp{M'\a}_{\rho,\theta[\a\mapsto\intprt{N}::\theta(\a)]}.
 \end{align*}
\end{proof}

\begin{theorem}[Soundness]
 Let $D$ be an arbitrary extensional stream model.  If
$M=_{\lm}N$, then $\intp{M}_{\rho,\theta}=\intp{N}_{\rho,\theta}$
holds in $D$ for any $\rho$ and $\theta$.
\end{theorem}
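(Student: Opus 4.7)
The plan is to proceed by induction on the derivation of $M =_{\lm} N$. Since $=_{\lm}$ is the compatible equivalence relation generated by the five axioms, there are three kinds of cases: (a) the base cases, one for each axiom; (b) the equivalence-closure cases (reflexivity, symmetry, transitivity), which are immediate from equality in $D$; and (c) the congruence cases for each of the four term constructors $\lambda x.(-)$, $(-)(-)$, $\mu\a.(-)$, $(-)\a$, each of which is immediate from the corresponding clause of the meaning function together with the induction hypothesis. So the real content is to verify soundness of the five axioms, and this is where Lemma \ref{lem:smodel} is used.

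For each axiom, I would unfold the definition of $\intp{-}_{\rho,\theta}$ on both sides and use the bijectivity $\Psi\circ\Phi = \id$, $\Phi\circ\Psi = \id$, together with the bijection $(::)$. Schematically: $\beta_T$ reduces by one application of $\Phi\Psi = \id$ and a case of $(::)^{-1}$ after which the two sides agree via Lemma \ref{lem:smodel}.1; $\beta_S$ reduces similarly by $\Phi\Psi=\id$ after which the two sides agree via Lemma \ref{lem:smodel}.2; $\eta_T$ follows from $\Psi\Phi=\id$ together with $\intp{x}_{\rho[x\mapsto d],\theta} = d$ and the side condition $x\notin FV(M)$ (which ensures $\intp{M}_{\rho[x\mapsto d],\theta} = \intp{M}_{\rho,\theta}$); $\eta_S$ is analogous, using $\a\notin FV(M)$ to discard the update of $\theta$.

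The main case is the $(\mu)$ axiom $(\mu\a.M)N =_\mu \mu\a.M[P\a:=PN\a]$. Unfolding the left-hand side yields
\[
\intp{(\mu\a.M)N}_{\rho,\theta} = \Psi\bigl(\olam s\in S.\,\intp{M}_{\rho,\theta[\a\mapsto\,\intp{N}_{\rho,\theta}::s]}\bigr),
\]
while unfolding the right-hand side and applying Lemma \ref{lem:smodel}.3 (with $\theta[\a\mapsto s]$ in place of $\theta$) gives exactly the same expression, provided $\a\notin FV(N)$, which we may assume by the usual $\alpha$-convention. This is the only step where Lemma \ref{lem:smodel}.3 is essential, and it is the heart of why the $(\mu)$-rule is sound in extensional stream models.

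The expected obstacle is bookkeeping in the $(\mu)$ case: one must be careful that when Lemma \ref{lem:smodel}.3 is applied inside the scope of $\olam s$, the $s$-bound stream propagates correctly into both the updated $\theta$ and the interpretation of $N$, and one must use the freshness side condition to collapse $\intp{N}_{\rho,\theta[\a\mapsto s]}$ to $\intp{N}_{\rho,\theta}$. The remaining axioms and all congruence/equivalence cases are routine.
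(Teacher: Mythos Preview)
Your proposal is correct and follows essentially the same approach as the paper: induction on the derivation of $M=_{\lm}N$, with the axiom cases discharged by unfolding the meaning function, using $\Phi\circ\Psi=\id$ and $\Psi\circ\Phi=\id$, and invoking Lemma~\ref{lem:smodel}. In particular, your treatment of the $(\mu)$ case---applying Lemma~\ref{lem:smodel}.3 under the binder $\olam s$ with the updated environment $\theta[\a\mapsto s]$ and then using $\a\notin FV(N)$ to collapse $\intp{N}_{\rho,\theta[\a\mapsto s]}$ to $\intp{N}_{\rho,\theta}$---is exactly the paper's argument.
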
 

\begin{proof}
By induction on $M=_{\lm}N$.  We show only two cases, and the
other cases are similarly proved by Lemma \ref{lem:smodel}.

Case ($\beta_T$).
\begin{align*}
 \intprt{(\lambda x.M)N}
 & = \Psi(\olam s.( \Psi(\olam d'::s'.(\intp{M}_{\rho[x\mapsto
 d'],\theta})\star s') )\star(\intprt{N}::s))\\
 & = \Psi(\olam s.(\intp{M}_{\rho[x\mapsto\intprt{N}],\theta})\star s)\\ 
 & = \intp{M}_{\rho[x\mapsto \intprt{N}],\theta}\\ 
 & = \intprt{M[x:=N]} \qquad \mbox{(by Lemma \ref{lem:smodel}.1)}
\end{align*}

%
%

Case ($\mu$).
\begin{align*}
 \intprt{(\mu\a.M)N}
 &= \Psi(\olam s.(\Psi(\olam s'.\intp{M}_{\rho,\theta[\a\mapsto s']}))\star(\intprt{N}::s))\\
 &= \Psi(\olam s.\intp{M}_{\rho,\theta[\a\mapsto \intprt{N}::s]})
\end{align*}
On the other hand, if we let $\theta'=\theta[\a\mapsto s]$, then the
following holds.
\begin{align*}
 \intprt{\mu\a.M[P\a:=PN\a]}
 &= \Psi(\olam s.\intp{M}_{\rho,\theta'[\a\mapsto
 \intp{N}_{\rho,\theta'}::s]}) \qquad \mbox{(by Lemma \ref{lem:smodel}.3)}\\ 
 &= \Psi(\olam s.\intp{M}_{\rho,\theta[\a\mapsto
 \intp{N}_{\rho,\theta}::s]}) \qquad \mbox{(by $\a\not\in FV(N)$)}
\end{align*}
\end{proof}

\begin{theorem}\label{thm:lambda-model}
 Every extensional stream model is an extensional $\lambda$-model in
 which the interpretation of $\lambda$-terms coincides with the
 interpretation in the stream model.
\end{theorem}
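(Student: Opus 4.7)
The plan is to extract the $\lambda$-model structure directly from the stream data and verify its axioms. Define binary application by $d\cdot e = \Psi(\olam s\in S.\ d\star(e::s))$; this is precisely the value assigned to $\intprt{MN}$ by Definition~\ref{def:ext}. For abstraction, given any $f:D\to D$ such that $\olam d::s.\ f(d)\star s$ lies in $[S\to D]$, set $\Lambda(f)=\Psi(\olam d::s.\ f(d)\star s)$; condition~4 of Definition~\ref{def:ext} guarantees that this is applicable whenever $f$ arises from interpreting a $\lambda$-abstraction body.

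The two required $\lambda$-model axioms reduce to short computations. The $\beta$-rule follows from
\[
 \Lambda(f)\cdot e = \Psi(\olam s.\ \Lambda(f)\star(e::s)) = \Psi(\olam s.\ f(e)\star s) = \Psi(\Phi(f(e))) = f(e),
\]
using $\Phi\circ\Psi=\id$ at the second step and $\Psi\circ\Phi=\id$ at the last. For weak extensionality, suppose $d\cdot e=d'\cdot e$ for every $e\in D$; injectivity of $\Psi$ gives $d\star(e::s) = d'\star(e::s)$ for all $e,s$, and since $(::):D\times S\to S$ is a bijection every element of $S$ has that form, so $\Phi(d)=\Phi(d')$ and hence $d=d'$.

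To show the interpretations coincide, I would argue by induction on a pure $\lambda$-term $M$ that its $\lambda$-model interpretation (built from the $\cdot$ and $\Lambda$ above) equals $\intprt{M}$. The variable case is by definition; the application and abstraction cases are immediate because the corresponding stream-model clauses in Definition~\ref{def:ext} are literal transcriptions of the $\lambda$-model constructions just introduced. Independence of the $\lambda$-model interpretation from $\theta$ also follows from this, since no subterm of a pure $\lambda$-term is of the form $M\a$ or $\mu\a.M$.

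There is no single hard step: the work is distributed across one small calculation and one small argument, each turning on the bijectivity of $\Psi$ together with that of $(::)$. The nearest thing to an obstacle is merely reconciling the extracted data with whichever axiomatisation of extensional $\lambda$-model is preferred (Hindley--Longo, Meyer--Scott, or Koymans style), but all of these are equivalent in the present setting and follow from the same two bijectivity properties.
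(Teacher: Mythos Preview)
Your proposal is correct and follows essentially the same route as the paper: the paper defines $\Phi_0(d)(d')=\Psi(\olam s.\,d\star(d'::s))$ and $\Psi_0(f)=\Psi(\olam d::s.\,f(d)\star s)$, which are exactly your $d\cdot d'$ and $\Lambda(f)$, and then checks the $\lambda$-model conditions and the coincidence of interpretations by induction just as you do. The only cosmetic difference is that the paper packages the data as $\Phi_0,\Psi_0$ rather than $(\cdot),\Lambda$, and leaves the verification of the $\lambda$-model axioms implicit (``easily checked''), whereas you spell out the $\beta$ and extensionality calculations.
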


\begin{proof}
 Let $D$ be an extensional stream model, then we can define $[D\to
   D]$, $\Phi_0:D\to[D\to D]$, and $\Psi_0:[D\to D]\to D$ as follows.
 \begin{align*}
  [D\to D] &:= \{f:D\to D \mid (\olam d::s\in
  S.(f(d))\star s)\in[S\to D] \}\\
  \Phi_0(d) &:= \olam d'\in D.\Psi(\olam s\in S.d\star(d'::s))\\
  \Psi_0(f) &:= \Psi(\olam d::s\in S.(f(d))\star s)
 \end{align*}
 Note that these are variants of {\sf eval} and {\sf abst} in
 \cite{Streicher-Reus1998}, and just based on the isomorphism $D\times
 S\simeq S $. Then, it is easily checked that $D$ is a $\lambda$-model
 with $\Phi_0$ and $\Psi_0$. The interpretation of the $\lambda$-terms
 in the $\lambda$-model, denoted $\intp{\cdot}^\lambda$ here,
 coincides with the interpretation in the stream model as follows:
 \begin{align*}
  \intp{\lambda x.M}^\lambda_\rho & = \Psi_0(\olam d\in
  D.\intp{M}^\lambda_{\rho[x\mapsto d]})\\
  & = \Psi(\olam d'::s'\in S.(\intp{M}^\lambda_{\rho[x\mapsto
  d']})\star s') & \mbox{(by Def. of $\Psi_0$)}\\
  & = \intp{\lambda x.M}_\rho & \mbox{(by IH)},\\
  \intp{MN}^\lambda_\rho
  & = \Phi_0(\intp{M}^\lambda_\rho)(\intp{N}^\lambda_\rho)\\
  & = \Psi(\olam s\in S.(\intp{M}^\lambda_\rho)\star(\intp{N}^\lambda_\rho::s)) & \mbox{(by Def. of $\Phi_0$)}\\
  & = \intp{MN}_\rho & \mbox{(by IH)}.
 \end{align*}
\end{proof}

\subsection{Categorical Stream Models}\label{sec:categorical}

In a categorical setting, a solution $(D,S)$ of the following
simultaneous recursive  equations in a CCC provides a model of the
$\lm$-calculus.
\begin{equation}
  \label{eq:streameq}
  D\times S\simeq S,\quad S\Rightarrow D\simeq D  
\end{equation}
\begin{definition}[Categorical stream models]\rm
  A {\em categorical stream model} in a CCC $\CC$ is a tuple
  $\tup{D,S,c,\psi}$ of objects $D$ and $S$, and isomorphisms
  $c:D\times S\to S$ and $\psi:S\Rightarrow D\to D$.
\end{definition}
When $\CC$ has countable products, the solutions of the following
recursive equation:
\begin{equation}
  \label{eq:nakazawa}
  D^{\Bf N}\Rightarrow D\simeq D
\end{equation}
yield categorical stream models, as we always have $D^{\Bf N}\simeq
D\times D^{\Bf N}$.

Given a categorical stream model $\tup{D,S,c,\psi}$, we can interpret
$\lm$-terms as a morphism $\intp M_{\vec x,\vec\alpha}:D^{|\vec
  x|}\times S^{|\vec\alpha|}\to D$, where $\vec x$
(resp. $\vec\alpha$) is a finite sequence of distinct term (stream)
variables such that every free term (stream) variable in $M$ occurs in
$\vec x$ ($\vec\alpha$), and $|\vec x|$ ($|\vec\alpha|$) is the length
of $\vec x$ ($\vec\alpha$). We omit the details of this
interpretation, as it is a straightforward categorical formulation of
the meaning function in Definition \ref{def:ext}.

When the underlying CCC $\CC$ of a categorical stream model is
well-pointed (that is, the global element functor
$\CC(1,-):\CC\to\Set$ is faithful), we can convert it to an
extensional stream model.
\begin{theorem}\label{thm:cat-lambda-model}
  Let $\CC$ be a well-pointed CCC.  For any categorical stream model
  $(D,S,c,\psi)$ in $\CC$, the following tuple is an extensional
  stream model:
  \begin{displaymath}
    \tup{\CC(1,D),~\CC(1,S),~\{\CC(1,f)~|~f\in\CC(S,D)\},~
    \olam (f,g).c\circ\langle f,g\rangle,~\Psi},
  \end{displaymath}
  where $\Psi$ is the function defined by $\Psi(\CC(1,f))=\psi\circ\lambda(f\circ\pi_2)$.
\end{theorem}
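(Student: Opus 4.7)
The plan is to verify the four conditions of Definition \ref{def:ext} for the proposed tuple, using well-pointedness of $\CC$ throughout to pass between morphisms in $\CC$ and their global elements. Conditions 1 and 2 follow from standard CCC facts: since $1$ is terminal, $\CC(1,-)$ preserves finite products, so $\CC(1,D\times S)\cong\CC(1,D)\times\CC(1,S)$ canonically; postcomposition with the isomorphism $c$ then gives the required bijection $\CC(1,D)\times\CC(1,S)\to\CC(1,S)$ sending $(f,g)$ to $c\circ\langle f,g\rangle$. Condition 2 is immediate from the definition of $[S\to D]$.

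For condition 3, I would factor $\Psi$ through two bijections: currying supplies the bijection $\CC(S,D)\cong\CC(1,S\Rightarrow D)$ sending $f$ to $\lambda(f\circ\pi_2)$, and postcomposition with the isomorphism $\psi$ gives $\CC(1,S\Rightarrow D)\cong\CC(1,D)$. Well-pointedness makes $f\mapsto\CC(1,f)$ injective on $\CC(S,D)$, so the set $[S\to D]$ is itself in bijection with $\CC(S,D)$, and $\Psi$ is the composite bijection onto $\CC(1,D)$.

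The substantive step is condition 4, the existence of a meaning function satisfying the five clauses. I would use the categorical interpretation $\intp M_{\vec x,\vec\alpha}:D^{|\vec x|}\times S^{|\vec\alpha|}\to D$ mentioned earlier in the paper: given set-level environments $\rho,\theta$ covering the free variables of $M$, define $\intp M_{\rho,\theta}:=\intp M_{\vec x,\vec\alpha}\circ\langle\rho(\vec x),\theta(\vec\alpha)\rangle \in \CC(1,D)$. The variable and $M\alpha$ clauses then reduce to the standard fact that applying a curried morphism to a global element corresponds, under $\CC(1,-)$, to ordinary function application. The $\mu\a.M$ clause follows because the categorical interpretation of a $\mu$-abstraction is, by construction, of the form $\psi\circ\lambda(\cdot)$, which matches the pointwise shape prescribed by the definition of $\Psi$.

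The case I expect to require the most care is $\lambda x.M$, since it mixes both structural isomorphisms $\psi$ and $c$. The categorical interpretation of $\lambda x.M$ is built from $\intp M_{\vec x,x,\vec\alpha}$ by uncurrying the stream context via $c^{-1}$ to expose a new slot for $x$ and then currying by $\psi\circ\lambda(\cdot)$. At the level of global elements, reindexing along $c^{-1}:S\to D\times S$ is exactly the set-theoretic pattern-match $\olam d::s\in S$ used in the stream-model clause, so the two expressions agree as morphisms $1\to D$. Well-pointedness is the key tool that turns equality of global elements into equality of morphisms in $\CC$, ensuring that the meaning function so defined is in fact unique and satisfies all clauses.
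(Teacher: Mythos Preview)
The paper does not supply a proof of this theorem; it is stated and immediately followed by the $D_\infty$-style example, so there is nothing to compare your argument against. Your proposal is the natural one and is essentially correct: conditions 1--3 follow exactly as you describe (well-pointedness is needed precisely so that $f\mapsto\CC(1,f)$ is injective, making $\Psi$ well defined and bijective), and for condition~4 defining $\intp M_{\rho,\theta}$ by composing the categorical interpretation $\intp M_{\vec x,\vec\alpha}$ with the tuple of global elements $\langle\rho(\vec x),\theta(\vec\alpha)\rangle$ is the right move. Two small remarks: first, you should note explicitly that the result is independent of the choice of $\vec x,\vec\alpha$ (a routine weakening lemma for the categorical interpretation); second, the $MN$ clause deserves a line of its own, since one must check that $\olam s.\Phi(\intp M_{\rho,\theta})(\intp N_{\rho,\theta}::s)$ lands in $[S\to D]$, i.e.\ is of the form $\CC(1,g)$ for some $g\in\CC(S,D)$---this is where the categorical interpretation of application, built from $\psi^{-1}$ and evaluation, does the work. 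Your closing sentence slightly overstates the role of well-pointedness for uniqueness: uniqueness of the meaning function is automatic from the inductive clauses, and well-pointedness is used only to verify that the proposed interpretation actually satisfies those clauses.
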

For instance, in the well-pointed CCC of pointed CPOs and all
continuous functions, the standard inverse limit method
\cite{Scott1972,Smyth-Plotkin1982} applied to the
following embedding-projection pair $( e : D_0 \rightarrow D_0^{\bf N}
\Rightarrow D_0, p : D_0^{\bf N} \Rightarrow D_0 \rightarrow D_0 )$:
\[ e \left( x \right) = \overline{\lambda} y \in D_0^{\bf N} .x,
\hspace{1em} p \left( f \right) = f \left( \bot, \ldots \right) \] on
a pointed CPO $D_0$ containing at least two elements yields a
non-trivial solution of \eqref{eq:nakazawa}. From this solution, an
extensional stream model is derived by Theorem
\ref{thm:cat-lambda-model}.  This model distinguishes $\intp{ \lambda
  x y.x}$ and $\intp{\lambda x y.y }$, hence, we obtain a model
theoretic consistency proof of the $\lm$-calculus (consistency also
follows from confluence, which has been proved in
\cite{Saurin2010c}).

\section{Stream Combinatory Algebra}

We give another model of the untyped $\lm$-calculus. It is called
stream combinatory algebra, which is an extension of the combinatory
algebra corresponding to the combinatory logic $\CL$.

\subsection{Combinatory Calculus $\CLm$}

We introduce a new combinatory calculus $\CLm$, and show that $\CLm$ is
equivalent to the $\lm$-calculus.  This result is an extension of
the equivalence between the $\lambda$-calculus and the untyped variant
of the ordinary combinatory logic $\CL$ with the combinators $\K$ and
$\S$.
In $\CLm$, the combinators $\K$ and $\S$ are denoted by $\K_0$ and
$\S_0$, respectively.

\begin{definition}[$\CLm$]\rm
Similarly to the $\lm$-calculus, $\CLm$ has two sorts of
variables: term variables $\Vars_T$ and stream variables
$\Vars_S$. Constants, terms, streams, axioms, and extensionality rules
of $\CLm$ are given in Fig. \ref{fig:CLm}.
\begin{figure*}[t]
Constants:
\[
 C ::= \K_0 \mid \K_1 \mid \S_0 \mid \S_1 \mid \C_{10} \mid \C_{11} \mid \W_1
\]

Terms:
\[
 T,U::= C \mid x \mid T\cdot U \mid T\star \St
\]

Streams:
\[
 \St ::= \alpha \mid T::\St
\]

Axioms:
\begin{align*}
 \K_0 \cdot T_1 \cdot T_2 & = T_1
 & \K_1 \cdot T_1 \star \St_2 & = T_1\\
 \S_0 \cdot T_1 \cdot T_2 \cdot T_3 & = T_1 \cdot T_3 \cdot (T_2 \cdot T_3)
 & \S_1 \cdot T_1 \cdot T_2 \star \St_3 & = T_1 \star \St_3 \cdot (T_2
 \star \St_3)\\
 \C_{10} \cdot T_1 \star \St_2 \cdot T_3 & = T_1 \cdot T_3 \star \St_2
 & \C_{11} \cdot T_1 \star \St_2 \star \St_3 & = T_1 \star \St_3 \star \St_2\\
 \W_1 \cdot T_1 \star \St_2 & = T_1 \star \St_2 \star \St_2
 & T_1\star(T_2::\St_3) & = T_1\cdot T_2\star\St_3
\end{align*}

Extensionality rules:
\[
 \infer[\mbox{($\zeta_T$)}]{T=U}{T\cdot x=U\cdot x & x\not\in FV(T)\cup FV(U)}
 \qquad
 \infer[\mbox{($\zeta_S$)}]{T=U}{T\star\a=U\star\a & \a\not\in FV(T)\cup FV(U)}
\]
\caption{$\CLm$}
\label{fig:CLm}
\end{figure*}
 The set of the $\CLm$-terms and the set of the $\CLm$-streams are
 denoted by $\Term_{\CLm}$ and $\Stream_{\CLm}$, respectively.  The set
 of variables occurring in $T$ is denoted by $FV(T)$. We suppose that
 the binary function symbols $(\cdot)$ and $(\star)$ have the same
 associative strength, and both are left associative. For example,
 $T_1\cdot T_2\star \St_3 \cdot T_4$ denotes $((T_1\cdot
 T_2)\star\St_3)\cdot T_4$. The substitutions $T[x:=T']$ and
 $T[\a:=\St]$ are defined straightforwardly.
The relation $T=_\CLm U$ is the compatible equivalence relation defined
from the axioms and the extensionality rules.
\end{definition}

The new operation $(\star)$ represents the function application for
streams, which corresponds to the application $M\a$ in the
$\lm$-calculus.




In the following, we think that the term of the form $T_1\cdot
T_2\star\St_3$ is simpler than $T_1\star(T_2::\St_3)$, and that is
formalized as the following measure $|T|$.

\begin{definition}\rm
The measure $|T|$ of $\CLm$-terms is defined as $|T|={\sf c}(T) + {\sf
m}(T)$, where ${\sf c}(T)$ is the number of the symbol $::$ occurring in
$T$, and ${\sf m}(T)$ is the number of nodes of the syntax tree of $T$.
\end{definition}

It is easily seen that if $T$ is a subterm of $U$ then $|T|<|U|$,
and $|T_1\cdot T_2\star\St_3|<|T_1 \star(T_2::\St_3)|$, which follows
from ${\sf m}(T_1\cdot T_2\star\St_3)={\sf m}(T_1\star(T_2::\St_3))$.

%
%
%

The $\lm$-calculus and $\CLm$ are equivalent through the
following translations.

\begin{definition}[Translations between $\lm$ and $\CLm$]\rm
 1. For $T\in\Term_{\CLm}$ and $x\in\Vars_T$, we define the
$\CLm$-term $\lambda^* x.T$ inductively on $|T|$ as follows:
\begin{align*}
 \lambda^*x.x & = \S_0\cdot\K_0\cdot\K_0\\
 \lambda^*x.T & = \K_0\cdot T & \mbox{($x\not\in FV(T)$)}\\
 \lambda^*x.(T\cdot U) & = \S_0\cdot (\lambda^*x.T) \cdot (\lambda^*x.U)\\
 \lambda^*x.(T\star\a) & = \C_{10}\cdot (\lambda^*x.T) \star\a\\
 \lambda^*x.(T\star(U::\a)) & = \lambda^*x.(T\cdot U\star\a).
\end{align*}

For $T\in\Term_{\CLm}$ and $\a\in\Vars_S$, we define
the $\CLm$-term $\mu^*\a.T$ inductively on $|T|$ as follows:
\begin{align*}
 \mu^*\a.T & = \K_1 \cdot T & \mbox{($\a\not\in FV(T)$)}\\
 \mu^*\a.(T\cdot U) & = \S_1 \cdot (\mu^*\a.T) \cdot (\mu^*\a.U)\\
 \mu^*\a.(T\star\a) & = \W_1 \cdot (\mu^*\a.T)\\
 \mu^*\a.(T\star\b) & = \C_{11} \cdot (\mu^*\a.T) \star \b &
 \mbox{($\a\not=\b$)}\\
 \mu^*\a.(T\star(U::\a)) & = \mu^*\a.(T\cdot U\star\a).
\end{align*}

Then the mapping $M^*$ from $\Term_{\lm}$ to $\Term_{\CLm}$ is
defined by
\begin{align*}
 x^* & = x\\
 (\lambda x.M)^* & = \lambda^* x.M^* 
 & (MN)^* & = M^*\cdot N^*\\
 (\mu\a.M)^* & = \mu^*\a.M^*
 & (M\a)^* & = M^*\star\a.
\end{align*}

2. The mappings $T_*$ from $\Term_{\CLm}$ to ${\rm Term_{\lambda\mu}}$
 and $\St_*$ from $\Stream_{\CLm}$ to contexts are defined by
\begin{align*}
 (\K_0)_* & = \lambda xy.x & x_* & = x \\
 (\K_1)_* & = \lambda x.\mu\a.x & (T\cdot U)_* & = T_* U_* \\
 (\S_0)_* & = \lambda xyz.xz(yz) & (T\star \St)_* & = \St_*[T_*] \\
 (\S_1)_* & = \lambda xy.\mu\a.x\a(y\a)\\
 (\C_{10})_* & = \lambda x.\mu\a.\lambda y.xy\a & \a_* & = []\a\\
 (\C_{11})_* & = \lambda x.\mu\a\b.x\b\a & (T::\St)_* & = \St_*[[]T_*].\\
 (\W_1)_* & = \lambda x.\mu\a.x\a\a 
\end{align*}
\end{definition}

By the extensionality of $\CLm$, the definitions of $\lambda^*
x.T$ and $\mu^*\a.T$ such that 1 of the following lemma holds are unique
modulo $=_\CLm$.

\begin{lemma}\label{lem:lm-and-scl1}
The following hold.

1. $(\lambda^* x.T)\cdot U =_\CLm T[x:=U]$ and $(\mu^*\a.T)\star\St
 =_\CLm T[\a:=\St]$.

2. If $T=_\CLm U$, then $\lambda^*x.T=_\CLm \lambda^*x.U$ and
$\mu^*\a.T=_\CLm\mu^*\a.U$.
\end{lemma}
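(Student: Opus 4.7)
My plan is to prove both parts by induction on the measure $|T|$ that underlies the definitions of $\lambda^*x.T$ and $\mu^*\a.T$.

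For Part~1, I will prove $(\lambda^*x.T)\cdot U =_\CLm T[x:=U]$ and $(\mu^*\a.T)\star\St =_\CLm T[\a:=\St]$ simultaneously by induction on $|T|$, performing a case analysis that mirrors the clauses defining $\lambda^*$ and $\mu^*$. Each clause has been engineered so that a single $\CLm$-axiom reduces the left-hand side to an expression in which the induction hypothesis applies to strictly smaller subterms. For instance, $\lambda^*x.(T_1\cdot T_2)\cdot U = \S_0\cdot(\lambda^*x.T_1)\cdot(\lambda^*x.T_2)\cdot U$ reduces via the $\S_0$-axiom, and two uses of the IH yield $T_1[x:=U]\cdot T_2[x:=U] = (T_1\cdot T_2)[x:=U]$. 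The only non-structural clauses are $\lambda^*x.(T_1\star(T_2::\a)) := \lambda^*x.(T_1\cdot T_2\star\a)$ and its $\mu^*$-analogue; here the recursion is well-founded because $|T_1\cdot T_2\star\a| < |T_1\star(T_2::\a)|$ (the $::$-count drops by one while the node-count is preserved), and the substituted right-hand sides $(T_1\cdot T_2\star\a)[x:=U]$ and $(T_1\star(T_2::\a))[x:=U]$ are equated in $\CLm$ via the axiom $T_1\star(T_2::\St_3) = T_1\cdot T_2\star\St_3$.

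For Part~2, I reduce to Part~1 using the extensionality rules. To show $\lambda^*x.T =_\CLm \lambda^*x.U$, choose a term variable $y$ fresh for $T$, $U$, $\lambda^*x.T$, and $\lambda^*x.U$. By Part~1 applied twice, $(\lambda^*x.T)\cdot y =_\CLm T[x:=y]$ and $(\lambda^*x.U)\cdot y =_\CLm U[x:=y]$. Since $=_\CLm$ is the compatible equivalence relation generated by the axioms, it is preserved by substitution, so $T =_\CLm U$ yields $T[x:=y] =_\CLm U[x:=y]$. Chaining these equalities gives $(\lambda^*x.T)\cdot y =_\CLm (\lambda^*x.U)\cdot y$, and since $y\notin FV(\lambda^*x.T)\cup FV(\lambda^*x.U)$, the rule $(\zeta_T)$ concludes $\lambda^*x.T =_\CLm \lambda^*x.U$. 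The $\mu^*$ case is identical using a fresh stream variable and $(\zeta_S)$.

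The main obstacle is the bookkeeping around the non-structural clauses in Part~1: verifying that the measure $|\cdot|$ truly decreases in the $T_1\star(T_2::\a)$ case, and that substitution acts coherently on both sides of the defining equation modulo the stream-flattening axiom. Once these clauses are handled, the remaining cases are a routine match between the definition of $\lambda^*$ (resp.\ $\mu^*$) and the corresponding $\CLm$-axiom, and Part~2 is then an immediate consequence together with extensionality.
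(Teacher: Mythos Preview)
Your proposal is correct and follows essentially the same approach as the paper: Part~1 is by induction on $|T|$ along the clauses defining $\lambda^*$ and $\mu^*$, and Part~2 is derived from Part~1 together with the extensionality rules. One minor simplification the paper uses for Part~2: it applies Part~1 with the bound variable $x$ itself rather than a fresh $y$, obtaining $(\lambda^*x.T)\cdot x =_\CLm T[x:=x]=T$ and $(\lambda^*x.U)\cdot x =_\CLm U$ directly, so the hypothesis $T=_\CLm U$ can be chained immediately without appealing to closure of $=_\CLm$ under substitution (and $x\notin FV(\lambda^*x.T)$ is immediate from the definition).
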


\begin{proof}
1. By induction on $|T|$.

2. By 1, we have $(\lambda^*x.T)\cdot x =_\CLm T$ and
$(\lambda^*x.U)\cdot x =_\CLm U$. Since $T =_\CLm U$, we have
 $(\lambda^*x.T)\cdot x =_\CLm (\lambda^*x.U)\cdot x$, and hence
$\lambda^*x.T=_\CLm\lambda^*x.U$ by ($\zeta_T$).
\end{proof}

\begin{lemma}\label{lem:lm-and-scl2}
 The following hold.

1. $(M[x:=N])^* =_\CLm M^*[x:=N^*]$.

2. $(M[\a:=\b])^* =_\CLm M^*[\a:=\b]$.

3. $(M[P\a:=PN\a])^* =_\CLm M^*[\a:=N^*::\a]$.
\end{lemma}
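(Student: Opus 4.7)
The plan is to prove all three parts by induction on the structure of $M$. The cases $M = x$ (handled trivially by whether the relevant substitution acts on $x$), $M = PQ$ (straight from the definitions and the IH), and $M = M'\b$ for a $\b$ that does not interact with the given substitution are routine. Part~2 in fact consists only of these cases together with the abstraction cases treated below.

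The abstraction cases $M = \lambda y.P$ and $M = \mu\b.P$ all reduce, after applying the IH and Lemma~\ref{lem:lm-and-scl1}.2, to four auxiliary ``substitution commutes with abstraction'' facts:
\begin{align*}
(\lambda^* y.T)[x:=U] &=_\CLm \lambda^* y.(T[x:=U]),\\
(\lambda^* y.T)[\a:=\St] &=_\CLm \lambda^* y.(T[\a:=\St]),\\
(\mu^* \b.T)[x:=U] &=_\CLm \mu^* \b.(T[x:=U]),\\
(\mu^* \b.T)[\a:=\St] &=_\CLm \mu^* \b.(T[\a:=\St]),
\end{align*}
each under the obvious freshness conditions on the bound variable ($y \notin FV(U)\cup\{x\}$, $\b \neq \a$ and $\b \notin FV(\St)$, etc.). I would prove each by applying extensionality ($\zeta_T$ or $\zeta_S$) at the bound variable itself: both sides, applied to $y$ (resp.\ $\b$), collapse by Lemma~\ref{lem:lm-and-scl1}.1 to $T[x:=U]$ (resp.\ its analog), using that $=_\CLm$ is closed under $\CLm$-level substitution (a standard property of compatible equational theories). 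The freshness hypotheses guarantee that the bound variable occurs free in neither side, so $\zeta_T$/$\zeta_S$ applies.

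For part~3, the interesting new case is $M = P\a$. Here the LHS unfolds to $(P[P\a:=PN\a])^* \cdot N^* \star \a$, whereas the RHS is $(P^* \star \a)[\a := N^* :: \a] = P^*[\a := N^*::\a] \star (N^*::\a)$; these are identified using the $\CLm$ axiom $T_1 \star (T_2 :: \St_3) = T_1 \cdot T_2 \star \St_3$ together with the IH on $P$. The shadowing case $M = \mu\a.P$ I would dispatch via Barendregt's variable convention (so only $M = \mu\b.P$ with $\b \neq \a$ actually arises), aided by the small side-remark that $\a \notin FV(\mu^*\a.T)$ for every $T$ (by routine induction on $|T|$ from the defining clauses of $\mu^*$), ensuring that $[\a:=N^*::\a]$ acts vacuously on $(\mu\a.P)^*$.

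The main obstacle is the four commutation lemmas. Because $\lambda^*$ and $\mu^*$ are defined by recursion on the measure $|T|$ rather than the syntactic structure of $T$ --- with a non-structural ``$::$-absorption'' clause $\lambda^*x.(T \star (U::\a)) = \lambda^*x.(T\cdot U \star \a)$ --- pushing a $\CLm$-substitution inwards clause-by-clause would be clumsy and would duplicate the measure-induction. The extensionality route sidesteps this entirely, by treating $\lambda^*y.T$ and $\mu^*\b.T$ as black boxes characterized solely by their behaviour under application, i.e.\ by Lemma~\ref{lem:lm-and-scl1}.1.
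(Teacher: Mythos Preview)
Your proposal is correct and follows essentially the same route as the paper: induction on $M$, with the binder cases handled via extensionality ($\zeta_T$/$\zeta_S$) after applying both sides to the bound variable and invoking Lemma~\ref{lem:lm-and-scl1}.1. The paper inlines your four commutation lemmas rather than stating them separately, but the underlying argument is the same.
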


\begin{proof}
 By induction on $M$. We show only the case of $M=\lambda y.M'$ for
 1. We suppose that $y\not\in FV(N)$ and $y\not\equiv x$ by renaming
 bound variables.  We have $((\lambda y.M')[x:=N])^*\cdot y = (\lambda
 y.M'[x:=N])^*\cdot y = (\lambda^* y.(M'[x:=N])^*)\cdot y =_{\CLm}
 (M'[x:=N])^*$ by Lemma \ref{lem:lm-and-scl1}.1, and it is identical
 with $M'^*[x:=N^*]$ by the induction hypothesis.  On the other hand, we
 have $(\lambda^* y.M'^*[x:=N^*])\cdot y =_{\CLm} M'^*[x:=N^*]$. Hence,
 by ($\zeta_T$), we have $((\lambda y.M')[x:=N])^* =_{\CLm} (\lambda
 y.M')^*[x:=N^*]$.
\end{proof}

\begin{lemma}\label{lem:lm-and-scl3}
The following hold.

1. $M=_{\lm}N$ implies $M^*=_\CLm N^*$.

2. $T=_\CLm U$ implies $T_*=_{\lm} U_*$.

3. $(M^*)_* =_{\lm} M$.

4. $(T_*)^* =_{\CLm} T$ and $(\St_*[M])^* =_\CLm M^*\star \St$
\end{lemma}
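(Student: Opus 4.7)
All four parts are inductions that repeatedly invoke Lemmas \ref{lem:lm-and-scl1} and \ref{lem:lm-and-scl2}.

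For (1), I would induct on the derivation of $M=_{\lm}N$. The $\beta_T$ case expands to $((\lambda x.M)N)^* = (\lambda^* x.M^*)\cdot N^* =_\CLm M^*[x:=N^*] =_\CLm (M[x:=N])^*$ by Lemmas \ref{lem:lm-and-scl1}.1 and \ref{lem:lm-and-scl2}.1, and $\beta_S$ is symmetric. For $\eta_T$ and $\eta_S$ I would apply a fresh variable to $(\lambda x.Mx)^* = \lambda^*x.(M^*\cdot x)$, use Lemma \ref{lem:lm-and-scl1}.1 to get $M^*\cdot x$, and close with $\zeta_T$ (respectively $\zeta_S$), noting that free variables are preserved by $(\cdot)^*$. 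Congruence under $\lambda$ and $\mu$ is covered by Lemma \ref{lem:lm-and-scl1}.2. The $(\mu)$-axiom is the most delicate case: applying $\star\a$ to both translations and using the axiom $T\cdot N^*\star\a =_\CLm T\star(N^*::\a)$, Lemma \ref{lem:lm-and-scl1}.1 reduces the left-hand side to $M^*[\a:=N^*::\a]$, while Lemmas \ref{lem:lm-and-scl1}.1 and \ref{lem:lm-and-scl2}.3 reduce the right-hand side to the same term; $\zeta_S$ then closes the argument, provided $\a\not\in FV(N^*)$, which is exactly where the usual side condition $\a\not\in FV(N)$ on the $(\mu)$-axiom is used. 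Part (2) is an analogous induction on the derivation of $T=_\CLm U$: each $\CLm$-axiom unfolds under $(\cdot)_*$ into a short sequence of $\beta_T,\beta_S,\mu$ steps in $\lm$ (the cons axiom in fact holds syntactically, both sides translating to $\St_{3*}[T_{1*}T_{2*}]$), while the $\zeta_T,\zeta_S$ rules are absorbed by $\eta_T,\eta_S$.

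For (3), I would induct on $M$; only the $\lambda x.M$ and $\mu\a.M$ cases are nontrivial. These reduce to two auxiliary claims, $(\lambda^*x.T)_* =_{\lm} \lambda x.T_*$ and $(\mu^*\a.T)_* =_{\lm} \mu\a.T_*$, each proved by subsidiary induction on $|T|$. The defining clauses unfold to routine $\beta$-calculations via the $\lm$-translations of $\K_i,\S_i,\C_{1j},\W_1$; the $T\star(U::\a)$ clause is handled by appealing to the inductive hypothesis for the strictly smaller term $T\cdot U\star\a$, using the syntactic identity $(T\star(U::\a))_* = T_*U_*\a = (T\cdot U\star\a)_*$.

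For (4), the two statements are proved by mutual induction. The stream claim proceeds by induction on $\St$: for $\St = T::\St'$ the cons axiom yields $M^*\star(T_*::\St') =_\CLm M^*\cdot(T_*)^*\star\St'$, then the term claim rewrites $(T_*)^*$ to $T$, and the IH finishes the argument. The term claim is by induction on $T$; the only computational cases are the seven constants, each dispatched by applying enough fresh term/stream variables, reducing both sides by the $\CLm$-axioms, and re-abstracting via $\zeta_T$ and $\zeta_S$. The remaining cases $T\cdot U$ and $T\star\St$ follow immediately from the IHs. The main technical obstacle throughout is the $(\mu)$-axiom case in (1): the meta-level recursive substitution $P\a := PN\a$ in $\lm$ has no direct counterpart in $\CLm$ and must be matched by the ordinary substitution $\a := N^*::\a$, which is precisely the content of Lemma \ref{lem:lm-and-scl2}.3 and requires careful attention to the freshness side conditions needed to legally apply $\zeta_S$.
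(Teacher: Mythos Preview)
Your proposal is correct and matches the paper's approach: the paper's proof is simply ``By the previous lemmas, they are proved by induction straightforwardly,'' and what you have written is exactly a fleshed-out version of those inductions, invoking Lemmas \ref{lem:lm-and-scl1} and \ref{lem:lm-and-scl2} at the expected places. (One tiny slip: in your discussion of the stream clause of part (4) you wrote $M^*\star(T_*::\St')$, mixing a $\lm$-term with a $\CLm$-stream; the intended expression is $M^*\star(T::\St')$, and the rest of the argument goes through as you describe.)
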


\begin{proof}
By the previous lemmas, they are proved by induction straightforwardly.
\end{proof}

It is shown that the combinatory calculus $\CLm$ is equivalent to the
$\lm$-calculus in the following sense.

\begin{theorem}\label{thm:lm-and-scl}
1. For any $\lm$-terms $M$ and $N$, $M=_{\lm}N$ iff $M^*=_{\CLm}N^*$.

2. For any $\CLm$-terms $T$ and $U$, $T=_{\CLm}U$ iff $T_*=_{\lm}U_*$.
\end{theorem}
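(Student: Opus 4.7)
The theorem follows almost directly by combining the four parts of Lemma~\ref{lem:lm-and-scl3}, which together say that the mappings $(-)^*$ and $(-)_*$ preserve equality in each direction and are mutually inverse modulo the respective equivalence relations. Since those lemmas have already been established, the plan is simply to chain them in the correct order.

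For part~1, the forward direction is immediate from Lemma~\ref{lem:lm-and-scl3}.1. For the converse, assume $M^* =_\CLm N^*$. Applying Lemma~\ref{lem:lm-and-scl3}.2 gives $(M^*)_* =_{\lm} (N^*)_*$, and by Lemma~\ref{lem:lm-and-scl3}.3 we have $(M^*)_* =_{\lm} M$ and $(N^*)_* =_{\lm} N$; transitivity of $=_{\lm}$ then yields $M =_{\lm} N$.

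Part~2 is handled symmetrically. The forward direction is Lemma~\ref{lem:lm-and-scl3}.2 itself. For the converse, assume $T_* =_{\lm} U_*$; Lemma~\ref{lem:lm-and-scl3}.1 gives $(T_*)^* =_\CLm (U_*)^*$, and Lemma~\ref{lem:lm-and-scl3}.4 collapses both sides to $T$ and $U$ respectively, so $T =_\CLm U$.

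There is essentially no obstacle at this stage, since all the structural induction was absorbed into the preceding lemmas: Lemma~\ref{lem:lm-and-scl1} handles $\beta$-style axioms on the combinatory side via $(\lambda^*x.T)\cdot U =_\CLm T[x := U]$ and $(\mu^*\a.T)\star\St =_\CLm T[\a := \St]$ together with $\zeta_T, \zeta_S$; Lemma~\ref{lem:lm-and-scl2} ensures that $(-)^*$ commutes with the three substitution operations appearing in the $\lm$-axioms (in particular the $\mu$-substitution $M[P\a := PN\a]$ is translated to the stream-extension $M^*[\a := N^*::\a]$). The only thing to check when writing the proof is that the transitivity steps are applied with respect to the correct equivalence, which is routine.
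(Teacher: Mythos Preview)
Your proposal is correct and follows exactly the same approach as the paper's proof: both use Lemma~\ref{lem:lm-and-scl3}.1 for the forward direction of part~1, Lemmas~\ref{lem:lm-and-scl3}.2 and~\ref{lem:lm-and-scl3}.3 chained via transitivity for the converse, and the symmetric combination of Lemmas~\ref{lem:lm-and-scl3}.1,~2, and~4 for part~2. Your additional paragraph explaining how the earlier lemmas absorb the inductive work is accurate supplementary commentary but does not change the argument.
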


\begin{proof}
 1. The only-if part is Lemma \ref{lem:lm-and-scl3}.1, and the if part
 is proved by Lemma \ref{lem:lm-and-scl3}.2 and \ref{lem:lm-and-scl3}.3 as
$M=_{\lm} (M^*)_* =_{\lm} (N^*)_* =_{\lm} N$.

 2. Similar to 1 by 1, 2, and 4 of Lemma \ref{lem:lm-and-scl3}.
\end{proof}
\subsection{Stream Combinatory Algebra}

The stream combinatory algebras are given as models of $\CLm$. Since
$\CLm$ is equivalent to the $\lm$-calculus in the sense of
Theorem \ref{thm:lm-and-scl}, they are also models of the untyped
$\lm$-calculus.

\begin{definition}[Stream combinatory algebras]\rm
(1) For non-empty sets $D$ and $S$, a tuple $\tup{D,S,\cdot,\star,::}$ is
 called a {\it stream applicative structure} if $(\cdot):D\times D\to D$,
 $(\star):D\times S\to D$, and $(::)\,:D\times S\to S$ are mappings such
 that
\[
 d_1\star(d_2::s_3) = d_1\cdot d_2\star s_3
\]
for any $d_1,d_2\in D$ and $s_3\in S$.

(2) A stream applicative structure $D$ is {\it
extensional} if the following hold for any $d,d'\in D$:
\begin{align*}
 \forall d_0\in D[d\cdot d_0 = d'\cdot d_0]
\ \mbox{implies}\ d=d',\\
 \forall s_0\in S[d\star s_0= d'\star s_0]
\ \mbox{implies}\ d=d'.
\end{align*}

(3) A stream applicative structure $D$ is called a
 {\it stream combinatory algebra} if $D$ contains distinguished elements
 $\k_0$, $\k_1$, $\s_0$, $\s_1$, $\c_{10}$, $\c_{11}$, and $\w_1$ such
 that the following hold for any $d_1,d_2,d_3\in D$ and $s_2,s_3\in
 S$.
\begin{align*}
 \k_0 \cdot d_1 \cdot d_2 & = d_1
 & \k_1 \cdot d_1 \star s_2 & = d_1\\
 \s_0 \cdot d_1 \cdot d_2 \cdot d_3 & = d_1 \cdot d_3 \cdot (d_2 \cdot d_3)
 & \s_1 \cdot d_1 \cdot d_2 \star s_3 & = d_1 \star s_3 \cdot (d_2 \star s_3)\\
 \c_{10} \cdot d_1 \star s_2 \cdot d_3 & = d_1 \cdot d_3 \star s_2
 & \c_{11} \cdot d_1 \star s_2 \star s_3 & = d_1 \star s_3 \star s_2\\
 \w_1 \cdot d_1 \star s_2 & = d_1 \star s_2 \star s_2
\end{align*}
\end{definition}


Note that, for a stream applicative structure
$\tup{D,S,\cdot,\star,::}$, the set $S$ is not necessarily a
stream set on $D$ in the sense of Section \ref{sec:smodel}, and
we will call $D$ {\em standard} if $\tup{S,::}$ is a stream set
on $D$.

It is clear that any stream combinatory algebra is always a combinatory
 algebra by ignoring the stream part, that is, $(\star)$, $(::)$, $\k_1$,
 $\s_1$, $\c_{10}$, $\c_{11}$, and $\w_1$. Therefore, any extensional
 stream combinatory algebra is an extensional combinatory algebra, and
 hence an extensional $\lambda$-model.

We can interpret $\CLm$ in stream combinatory algebras in a
straightforward way.

\begin{definition}[Interpretation of $\CLm$]\rm
 Let $\tup{D,S,\cdot,\star,::}$ be a stream combinatory algebra.  The
{\em meaning functions} $\intP{-}^T:\Term_{\CLm}\times(\Vars_T\to
D)\times(\Vars_S\to S)\to D$ and
$\intP{-}^S:\Stream_{\CLm}\times(\Vars_T\to D)\times(\Vars_S\to S)\to S$
are defined by:
\begin{align*}
 \intP{C}^T_{\rho,\theta} & = c & \intP{\a}^S_{\rho,\theta} & = \theta(\a)\\
 \intP{x}^T_{\rho,\theta} & = \rho(x) & \intP{T::\St}^S_{\rho,\theta} & = \intP{T}^T_{\rho,\theta}::\intP{\St}^S_{\rho,\theta},\\
 \intP{T\cdot U}^T_{\rho,\theta} & = \intP{T}^T_{\rho,\theta}\cdot\intP{U}^T_{\rho,\theta}\\
 \intP{T\star \St}^T_{\rho,\theta} & = \intP{T}^T_{\rho,\theta}\star\intP{\St}^S_{\rho,\theta}
\end{align*}
where $c$ denotes the element of $D$ corresponding to the constant $C$,
that is, $\intP{\K_0}^T_{\rho,\theta}=\k_0$,
$\intP{\S_0}^T_{\rho,\theta}=\s_0$, and so on. We often omit the
 superscript $T$ or $S$.
\end{definition}

\begin{theorem}[Soundness and completeness]\label{thm:sca}
For any $\CLm$-terms $T$ and $U$, $T=_\CLm U$ iff
$\intP{T}_{\rho,\theta}=\intP{U}_{\rho,\theta}$ in any extensional
stream combinatory algebra for any $\rho$ and $\theta$.
\end{theorem}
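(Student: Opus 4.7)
The plan is to prove the two directions separately: soundness by induction on the equational derivation, and completeness by constructing a term model.

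For soundness, I would induct on the derivation of $T=_{\CLm}U$. Each axiom in Fig.~\ref{fig:CLm} transfers one-to-one to a defining equation of stream combinatory algebras once each constant $C$ is interpreted as its designated element (so $\intP{\K_0}=\k_0$, and so on), and the mixed axiom $T_1\star(T_2::\St_3)=T_1\cdot T_2\star\St_3$ holds by the very definition of stream applicative structure; hence the base cases are immediate. Congruence cases reduce to a substitution lemma $\intP{T[x:=U]}_{\rho,\theta}=\intP{T}_{\rho[x\mapsto\intP{U}_{\rho,\theta}],\theta}$, together with its analogue for $T[\a:=\St]$, provable by routine induction on $T$. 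For the rule $(\zeta_T)$, from the premise $T\cdot x=_{\CLm}U\cdot x$ with $x\notin FV(T)\cup FV(U)$ the IH yields $\intP{T}_{\rho[x\mapsto d],\theta}\cdot d=\intP{U}_{\rho[x\mapsto d],\theta}\cdot d$ for every $d\in D$; since $x$ is fresh this simplifies to $\intP{T}_{\rho,\theta}\cdot d=\intP{U}_{\rho,\theta}\cdot d$ for all $d$, and the extensionality clause for $(\cdot)$ concludes. The rule $(\zeta_S)$ is symmetric, using the extensionality clause for $(\star)$.

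For completeness, I construct the \emph{term model}. Let $D:=\Term_{\CLm}/{=_{\CLm}}$ and $S:=\Stream_{\CLm}/{=_{\CLm}}$, with $[T]\cdot[U]:=[T\cdot U]$, $[T]\star[\St]:=[T\star\St]$ and $[T]::[\St]:=[T::\St]$; these are well-defined by congruence. The mixed axiom of $\CLm$ makes $\tup{D,S,\cdot,\star,::}$ a stream applicative structure, and taking $\k_0:=[\K_0]$, $\k_1:=[\K_1]$, $\s_0:=[\S_0]$, and so on, the combinator equations become the $\CLm$ axioms read in the quotient, so we obtain a stream combinatory algebra. Extensionality in the algebraic sense is exactly what the rules $(\zeta_T)$ and $(\zeta_S)$ supply, provided we choose a witness variable fresh in $T$ and $U$. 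With the canonical environment $\rho_0(x):=[x]$ and $\theta_0(\a):=[\a]$, one verifies by induction $\intP{T}_{\rho_0,\theta_0}=[T]$ and $\intP{\St}_{\rho_0,\theta_0}=[\St]$. Hence, if $\intP{T}_{\rho,\theta}=\intP{U}_{\rho,\theta}$ holds universally, it holds in this term model at $\rho_0,\theta_0$ in particular, yielding $[T]=[U]$, i.e., $T=_{\CLm}U$.

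The delicate point is verifying extensionality of the term model. The algebraic requirement is that $[T]\cdot[d_0]=[U]\cdot[d_0]$ for \emph{every} $[d_0]\in D$ implies $[T]=[U]$. To apply $(\zeta_T)$ one must instantiate $[d_0]$ at some term variable $x$ fresh in $T$ and $U$, and then exploit that $x$ is itself a $\CLm$-term so $[x]\in D$. This step relies essentially on the infinity of $\Vars_T$, and symmetrically on the infinity of $\Vars_S$ for $(\zeta_S)$; once those freshness choices are available, the matching between the two flavours of extensionality is exact and the remainder of the argument is routine.
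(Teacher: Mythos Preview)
Your proof is correct and follows the same approach as the paper: soundness by induction on the derivation, completeness via the term model $\Term_{\CLm}/{=_{\CLm}}$ with the canonical environment. One small remark: the substitution lemma you invoke for the congruence cases is unnecessary, since $\CLm$ has no binders; compatibility of $=_{\CLm}$ under $\cdot$, $\star$, and $::$ transfers directly by compositionality of $\intP{-}$.
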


\begin{proof}
 (Only-if part) The soundness can be proved by straightforward induction
 on $T=_\CLm U$.

 (If part) We can construct a term model as follows.
Let $D=\Term_\CLm/=_\CLm$ and $S=\Stream_\CLm/=_\CLm$, and the
equivalence classes in $D$ and $S$ are denoted such as $[T]$ and
$[\St]$. The operations are defined as $[T]\cdot[U]=[T\cdot U]$,
$[T]\star[\St]=[T\star\St]$, and $[T]::[\St]=[T::\St]$. The element
$\k_0$ is defined as $[\K_0]$ and similar for the other constants. The
resulting structure is easily proved to be an extensional stream
combinatory algebra. If we take $\rho$ and $\theta$ as $\rho(x)=[x]$ and
$\theta(\a)=[\a]$, respectively, then $\intP{T}_{\rho,\theta}=[T]$ for
any $T\in\Term_\CLm$, hence we have that
$\intP{T}_{\rho,\theta}=\intP{U}_{\rho,\theta}$ implies $T=_\CLm U$.
\end{proof}

\begin{corollary}
For any $\lm$-terms $M$ and $N$, $M=_{\lm} N$ iff
$\intP{M^*}_{\rho,\theta}=\intP{N^*}_{\rho,\theta}$ in any extensional
stream combinatory algebra for any $\rho$ and $\theta$.
\end{corollary}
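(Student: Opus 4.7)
The plan is to derive this corollary by a direct chaining of the two equivalences already established, namely Theorem \ref{thm:lm-and-scl}.1 (which equates $M =_{\lm} N$ with $M^* =_{\CLm} N^*$) and Theorem \ref{thm:sca} (soundness and completeness of extensional stream combinatory algebras for $\CLm$). No new semantic or syntactic work should be required; the content of the corollary is essentially ``glue'' between the algebraic and the calculus-level equivalences.

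For the only-if direction, I would start from $M =_{\lm} N$ and apply Theorem \ref{thm:lm-and-scl}.1 to obtain $M^* =_{\CLm} N^*$. Then the soundness half of Theorem \ref{thm:sca} immediately yields $\intP{M^*}_{\rho,\theta} = \intP{N^*}_{\rho,\theta}$ in every extensional stream combinatory algebra and for every $\rho,\theta$.

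For the if direction, I would argue contrapositively through the same chain in reverse. Assuming $\intP{M^*}_{\rho,\theta} = \intP{N^*}_{\rho,\theta}$ holds in every extensional stream combinatory algebra for all $\rho,\theta$, the completeness half of Theorem \ref{thm:sca} gives $M^* =_{\CLm} N^*$, and a second application of Theorem \ref{thm:lm-and-scl}.1 delivers $M =_{\lm} N$.

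There is no genuine obstacle here: the corollary is a transitive composition of biconditionals, and both halves used are already proved in the text. The only minor point worth a sentence in the proof is to note that both $M^*$ and $N^*$ are legitimate $\CLm$-terms, so Theorem \ref{thm:sca} applies to them without any side condition. Thus the proof can be written in two or three lines as $M =_{\lm} N \iff M^* =_{\CLm} N^* \iff \intP{M^*}_{\rho,\theta} = \intP{N^*}_{\rho,\theta}$ in every extensional stream combinatory algebra.
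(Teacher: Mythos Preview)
Your proposal is correct and matches the paper's own proof, which simply states that the corollary follows immediately from Theorem \ref{thm:lm-and-scl} and Theorem \ref{thm:sca}. You have just spelled out the two-step biconditional chain that the paper leaves implicit.
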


\begin{proof}
 It immediately follows from Theorem \ref{thm:lm-and-scl} and Theorem
 \ref{thm:sca}.
\end{proof}

\section{Algebraic Characterization of Stream Models}

Definition \ref{def:ext} of the extensional stream models is a direct
one, but it depends on the definability of the meaning function on the
$\lm$-terms. In this section, we give a syntax-free characterization for
the extensional stream models, that is, the class of the extensional
stream models exactly coincides with the subclass of the extensional
stream combinatory algebras in which $S$ is a stream set on
$D$.

\begin{definition}\rm
A stream applicative structure $\tup{D,S,\cdot,\star,::}$ is {\em
 standard} if $\tup{S,::}$ is a stream set on $D$.
\end{definition}

Note that, for standard stream applicative structures, the
extensionality for term application $(\cdot)$ follows from the
extensionality for $(\star)$ since $(::)$ is surjective: suppose
$d_1\cdot d=d_2\cdot d$ for any $d\in D$, then for any $s\in S$
we have $d_1\cdot d \star s=d_2\cdot d\star s$, which means
$d_1\star(d::s)=d_2\star(d::s)$ for any $d$ and $s$. Hence $d_1=d_2$ by
the extensionality with respect to $\star$.

\begin{theorem}
For a non-empty set $D$ and a stream set $\tup{S,::}$ on $D$, the
following are equivalent.

1. $\tup{D,S}$ is an extensional stream model with some $[S\to D]$ and $\Psi$.

2. $\tup{D,S}$ is a standard extensional combinatory algebra with some
 operations $(\cdot)$ and $(\star)$, and some elements $\k_0$, $\k_1$,
 $\s_0$, $\s_1$, $\c_{10}$, $\c_{11}$, $\w_1$ in $D$.
\end{theorem}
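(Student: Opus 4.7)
The plan is to prove each direction by an explicit construction, using the translation $(-)^{*}$ from Section 4 and soundness of the two semantics as the main tools.

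For $(1)\Rightarrow(2)$, I set $d\star s:=\Phi(d)(s)$ and $d\cdot d':=\Psi(\olam s\in S.\,\Phi(d)(d'::s))$. Well-definedness of the latter (i.e.\ that the argument lies in $[S\to D]$) is forced by Definition \ref{def:ext}.4 applied to the term $xy$ under an environment sending $x\mapsto d$ and $y\mapsto d'$. The mixed axiom $d_1\star(d_2::s_3)=d_1\cdot d_2\star s_3$ is then immediate from $\Phi\circ\Psi=\id$. Each combinator element $\k_0,\k_1,\s_0,\ldots$ is defined as the interpretation of the corresponding closed $\lm$-term $(\K_0)_{*},(\K_1)_{*},\ldots$ in the stream model. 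With our definitions, $\intp{MN}=\intp{M}\cdot\intp{N}$ and $\intp{M\a}=\intp{M}\star\theta(\a)$, so each combinator axiom follows from soundness of the meaning function applied to the corresponding $\lm$-equation: for instance, $\k_0\cdot d_1\cdot d_2=\intp{(\lambda xy.x)ab}_\rho=\intp{a}_\rho=d_1$ for $\rho(a)=d_1,\rho(b)=d_2$, via $(\lambda xy.x)ab=_{\lm}a$; the cases of $\k_1,\s_1,\c_{10},\c_{11},\w_1$ are analogous, additionally using the $\star$-clause for $\mu$-abstractions. Extensionality for $\star$ is just injectivity of $\Phi$, and extensionality for $\cdot$ then follows exactly as recorded in the paragraph immediately preceding the theorem.

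For $(2)\Rightarrow(1)$, I set $[S\to D]:=\{\olam s\in S.\,d\star s\mid d\in D\}$ and $\Psi(\olam s\in S.\,d\star s):=d$; well-definedness of $\Psi$ is precisely extensionality with respect to $\star$, and under this choice $\Phi(d)$ is the function $\olam s\in S.\,d\star s$. The meaning function is defined by transport along the translation: $\intp{M}_{\rho,\theta}:=\intP{M^{*}}_{\rho,\theta}$. Uniqueness of a function satisfying the clauses of Definition \ref{def:ext}.4 is automatic by induction on terms, so only existence needs checking. The clauses for $x$, $MN$, and $M\a$ reduce to direct calculations using the mixed axiom $d_1\cdot d_2\star s=d_1\star(d_2::s)$. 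The clauses for $\lambda x.M$ and $\mu\a.M$ reduce, after applying $\Phi$, to the identities
\[
\intP{\lambda^{*}x.M^{*}}_{\rho,\theta}\cdot d=\intP{M^{*}}_{\rho[x\mapsto d],\theta},\qquad \intP{\mu^{*}\a.M^{*}}_{\rho,\theta}\star s=\intP{M^{*}}_{\rho,\theta[\a\mapsto s]},
\]
each of which follows by instantiating a fresh variable ($y\in\Vars_T$ or $\b\in\Vars_S$) with value $d$ or $s$, rewriting via Lemma \ref{lem:lm-and-scl1}.1, applying the only-if direction of Theorem \ref{thm:sca}, and finally eliminating the fresh variable via a standard substitution lemma for $\intP{-}$, which I will establish separately by straightforward induction on $\CLm$-terms.

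The main obstacle is the $\mu$-clause in the $(2)\Rightarrow(1)$ direction: one has to name a semantic value $s\in S$ by a fresh syntactic stream variable $\b$, interpret the $\CLm$-equation $(\mu^{*}\a.M^{*})\star\b=_{\CLm}M^{*}[\a:=\b]$ under the augmented environment $\theta[\b\mapsto s]$, and then eliminate $\b$ using the substitution lemma for $\intP{-}$; once this pattern is in place, the $\lambda$-clause is settled by the mirror argument with a fresh term variable $y$ and the identity $(\lambda^{*}x.M^{*})\cdot y=_{\CLm}M^{*}[x:=y]$.
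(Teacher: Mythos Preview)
Your proposal is correct and follows essentially the same route as the paper: the same definitions of $\cdot,\star,[S\to D],\Psi,\Phi$ in each direction, the same identification $\intp{M}_{\rho,\theta}=\intP{M^{*}}_{\rho,\theta}$ for $(2)\Rightarrow(1)$, and the same two key identities $\intP{\lambda^{*}x.T}\cdot d=\intP{T}_{\rho[x\mapsto d]}$ and $\intP{\mu^{*}\a.T}\star s=\intP{T}_{\theta[\a\mapsto s]}$. The only cosmetic difference is that the paper leaves these identities as unproved lemmas (implicitly by induction on $|T|$), while you derive them from Lemma~\ref{lem:lm-and-scl1}.1 plus soundness and a substitution lemma for $\intP{-}$; both arguments are routine and equivalent in strength.
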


\begin{proof}
(1$\Longrightarrow$2) Suppose
$\tup{D,S,[S\to D],::,\Psi}$ is an extensional stream model. Define
\begin{align*}
 d\star s & = \Phi(d)(s)
 & d\cdot d' & = \Psi(\olam s\in S.\Phi(d)(d'::s)),
\end{align*}
where we should note that $d \cdot d'$ is identical to
$\intp{xy}_{\rho[x\mapsto d,y\mapsto d']}$ and hence it is always
defined. Define $\k_0 = \intp{\lambda xy.x}$ and so on.
Then $\tup{D,S,\cdot,\star,::}$ is a standard extensional stream combinatory
algebra.  Indeed, it is a stream applicative structure, since
\begin{align*}
 d_1\cdot d_2 \star s_3 = \Phi(\Psi(\olam s.\Phi(d_1)(d_2::s)))(s_3)
  = \Phi(d_1)(d_2::s_3)
  = d_1\star(d_2::s_3).
\end{align*}

(2$\Longrightarrow$1) Suppose $\tup{D,S,\cdot,\star,::}$ is a standard
extensional stream combinatory algebra. Define $[S\to D]:=\{f_d
\mid d\in D\}$, where $f_d$ denotes $\olam s\in S.d\star s$. Then
$\Phi(d) = f_d$ and $\Psi(f_d) = d$
are well-defined since $D$ is extensional, and they give a bijection
 between $[S\to D]$ and $D$.
We can see that the interpretation $\intprt{M}$ with respect to $\Phi$ and
$\Psi$ coincides with $\intP{M^*}_{\rho,\theta}$. That is shown by the
following lemmas for any $\CLm$-term $T$:
\begin{align*}
\intP{\lambda^*x.T}_{\rho,\theta}\cdot d &= \intP{T}_{\rho[x\mapsto d],\theta}
&
\intP{\mu^*\a.T}_{\rho,\theta}\star s &= \intP{T}_{\rho,\theta[\a\mapsto s]}.
\end{align*}
In the case of $M=\lambda x.N$, $\intprt{M}=\intP{M^*}_{\rho,\theta}$
 is proved as follows.
\begin{align*}
\intP{M^*}_{\rho,\theta}\star(d::s)
&= \intP{M^*}_{\rho,\theta}\cdot d\star s\\
&= \intP{N^*}_{\rho[x\mapsto d],\theta}\star s & \mbox{(by the lemma)}\\
&= \intp{N}_{\rho[x\mapsto d],\theta}\star s & \mbox{(by IH)}\\
&= \Phi(\intp{N}_{\rho[x\mapsto d],\theta})(s)
\end{align*}
Therefore we have $\olam d::s.\Phi(\intp{N}_{\rho[x\mapsto
d],\theta})(s) = \olam d::s.\intP{M^*}_{\rho,\theta}\star(d::s) =
f_{\intP{M^*}_{\rho,\theta}} \in [S\to D]$, and hence
$\intp{M}_{\rho,\theta}$ is defined and identical to
$\Psi(f_{\intP{M^*}_{\rho,\theta}})=\intP{M^*}_{\rho,\theta}$. The other
cases are similarly proved.  Hence, $\tup{D,S,[S\to D],::,\Psi}$ is an
extensional stream model.
\end{proof}

\section{Conclusion}

We have proposed models of the untyped $\lm$-calculus:
the set-theoretic and the categorical stream models, and the stream combinatory algebras. We have also
shown that extensional stream models are algebraically characterized as
a particular class of the extensional stream combinatory algebras. The
stream combinatory algebra has been induced from the new combinatory
calculus $\CLm$, which exactly corresponds to the untyped
$\lm$-calculus.


\subsection{Related Work}
{\bf Models of the untyped $\lambda\mu$-calculus.} 
In \cite{Streicher-Reus1998}, Streicher and Reus proposed the
continuation models for the untyped $\lambda\mu$-calculus (which is a
variant of Parigot's original $\lambda\mu$-calculus) based on the
idea that the $\lambda \mu$-calculus is a calculus of
continuations. If we see each stream $d :: s$ as a pair $\left(
  d, s \right)$ of a function argument $d$ and a continuation $s$, the
meaning function for the stream models looks exactly the same as that
for the continuation models.

In the untyped $\lambda \mu$-calculus in
  \cite{Streicher-Reus1998}, the named terms are distinguished from
  the ordinary terms. In the continuation models, an object $R$
  (called {\em response object}) for the denotations of named terms is
  fixed first, then the object $D$ for the denotations of the ordinary
  terms and the object $S$ for continuations are respectively given as
  the solutions of the following simultaneous recursive equations:
\begin{equation}
  D \times S \cong S, \hspace{1em} S \Rightarrow R \cong D. \label{eq:reus}
\end{equation}
These equations say that the continuations are streams of ordinary
terms, and the ordinary terms can act as functions from continuations
to responses (i.e.  results of computations).  On the other hand, in
the $\Lambda\mu$-calculus, the named terms and terms are integrated
into one syntactic category, thus allowing us to pass terms to named
terms, such as $M \alpha N$.  In the model side, this extension
corresponds to that the response object $R$ in \eqref{eq:reus} is
replaced by $D$, resulting in the simultaneous recursive equations
\eqref{eq:streameq}.


In \cite{vanBakelEtAl2011}, van Bakel et al. considered
intersection type systems and filter models for the
$\lambda\mu$-calculus based on the idea of the continuation models of
Streicher and Reus. They considered only the original
$\lambda\mu$-calculus, and $\Lambda\mu$-terms such as $\mu\a.x$ have no
type except for $\omega$ in the proposed intersection type system, and
hence, they are interpreted as the bottom element in the filter
model. They also showed that every continuation model can be a model of
the $\lm$-calculus. The idea is to translate each $\lm$-term to a
$\lambda\mu$-term as $\mu\a.M$ to $\mu\a.M\a$ and $M\a$ to $\mu\b.M\a$
with a fresh $\b$. However, as pointed out in \cite{vanBakelEtAl2011},
the axiom ($\beta_S$) is unsound for this interpretation in general,
whereas it is sound in our stream models.

Akama \cite{Akama2001} showed that the untyped $\lambda\mu$-calculus can
be interpreted in partial combinatory algebras. It is based on the idea
that $\mu$-abstractions are functions on streams. However, it restricts
terms to affine ones, that is, each bound variable must not occur more
than once.

Fujita \cite{Fujita2002} considered a reduction system for the
$\lambda\mu$-calculus with ($\beta_T$), ($\eta_T$), ($\mu$), and (${\it
fst}$) rules, and gives a translation from the $\lambda\mu$-calculus to
the $\lambda$-calculus which preserves the equality, and hence it is
shown that any extensional $\lambda$-model is a model of the
$\lambda\mu$-calculus.  In the translation, each $\mu$-abstraction is
interpreted as a potentially infinite $\lambda$-abstraction by means of
a fixed point operator. However, it considers neither ($\beta_S$) nor
($\eta_S$), and it seems hard to obtain a similar result for them.

{\bf Combinatory logic and classical logic.}
Baba et al. considered some extensions of the $\lambda$-calculus with
combinators corresponding to classical axioms such as Peirce's law and
double negation elimination in \cite{BKH2000}.

Nour \cite{Nour2006} introduced the classical combinatory logic
corresponding to Barbanera and Berardi's symmetric $\lambda$-calculus
\cite{BB1994}. The classical combinatory logic has two kinds of
application operators: one is the ordinary function application, and the
other represents the interaction of terms and continuations, which is
based on the same idea as the stream application operator in $\CLm$ (and
denoted by the same symbol $\star$). Nour's classical combinatory logic
is a typed calculus corresponding to classical logic, and its weak
reduction corresponds to the reduction of the symmetric
$\lambda$-calculus. On the other hand, we have not found any reasonable
type system for $\CLm$ as discussed below, but $\CLm$ corresponds to the
$\lm$-calculus, and, in particular, it can represent the
$\mu$-abstraction over continuation variables.

\subsection{Further Study}
{\bf (Extensional) stream models.}
One natural direction of study is to analyze the local structure of the
domain-theoretic extensional stream models constructed from the
solutions of \eqref{eq:nakazawa} in Section \ref{sec:categorical}. How
do they relate to the B\"{o}hm-tree representation proposed in
\cite{Saurin2010b}?  Do these models enjoy the approximation theorem?
Which syntactic equality corresponds to the equality in these models?


We have considered only extensional theories and models in this
paper. We can na\"{i}vely define non-extensional stream models by
weakening the condition $[S\to D]\simeq D$ to $[S\to D]\lhd D$, and then
the functions $\Phi_0$ and $\Psi_0$ in Theorem \ref{thm:lambda-model}
are still well-defined.  However, under such a structure, we always have
$\Psi_0\circ \Phi_0 = {\rm id}$, so the extensionality axiom $\eta_T$ is
unexpectedly sound, for example $\intp{\lambda xy.xy}=\intp{\lambda
x.x}$ always holds.
Furthermore, we do not know how to derive that $\Phi_0\circ\Psi_0={\rm
id}$, which is essential for modeling the $\beta$-equality of the term
application. It is future work to study how we can define appropriate
notion of the models of the non-extensional $\lm$-calculus.

Moreover, syntactic correspondence between non-extensional
theories of the $\lm$-calculus and $\CLm$ is still unclear and it is
future work to study on it.

{\bf Types and classical logic.} The $\lambda\mu$-calculus was
originally introduced as a typed calculus corresponding to the classical
natural deduction in the sense of the Curry-Howard isomorphism. It is
future work to adapt our discussion to a typed setting and to
study the relationship to classical logic. It is
well-known that the combinatory logic with types exactly corresponds to
the Hilbert-style proof system of intuitionistic logic. On the other
hand, it is unclear how we can consider $\CLm$ as a typed calculus,
since the $\lm$-terms corresponding to the constants of $\CLm$
are not typable in the ordinary typed $\lambda\mu$-calculus, for
example, $(\S_1)_*=\lambda xy.\mu\a.x\a(y\a)$.

\paragraph*{}
{\bf Acknowledgments} We are grateful to Dana Scott, Kazushige Terui,
Makoto Tatsuta, and anonymous reviewers for helpful comments, and to
Daisuke Kimura for fruitful discussions.

\paragraph*{}
\bibliographystyle{plain}
\bibliography{main}


\end{document}